\begin{document}
\theoremstyle{plain}
\newtheorem{theorem}{Theorem}[section]
\newtheorem{lemma}{Lemma}[section]
\newtheorem{proposition}{Proposition}[section]
\newtheorem{corollary}{Corollary}[section]
\newtheorem{example}{Example}[section]
\theoremstyle{definition}
\newtheorem{notations}[theorem]{Notations}
\newtheorem{notation}[theorem]{Notation}
\newtheorem{remark}[theorem]{Remark}
\newtheorem{question}[theorem]{Question}
\newtheorem{definition}[theorem]{Definition}
\newtheorem{condition}[theorem]{Condition}
\newtheorem{open problem}[theorem]{Open Problem}
\newcommand{\Hom}{{\rm Hom}}
\newcommand{\Ima}{{\rm Im}}
\newcommand{\Ker}{{\rm Ker \ }}
\newcommand{\rank}{{\rm rank}}
\newcommand{\Norm}{{\rm Norm}}
\newcommand{\rowspace}{{\rm rowspace}}
\newcommand{\Span}{{\rm Span}}
\newcommand{\Tr}{{\rm Tr}}
\newcommand{\Char}{{\rm char \ }}

\newcommand{\soplus}[1]{\stackrel{#1}{\oplus}}
\newcommand{\dlog}{{\rm dlog}\,}    

\newenvironment{pf}{\noindent\textbf{Proof.}\quad}{\hfill{$\Box$}}

\newcommand{\sA}{{\mathcal A}}
\newcommand{\sC}{{\mathcal C}}
\newcommand{\sD}{{\mathcal D}}
\newcommand{\sG}{{\mathcal G}}
\newcommand{\sH}{{\mathcal H}}
\newcommand{\sL}{{\mathcal L}}
\newcommand{\sP}{{\mathcal P}}

\newcommand{\F}{{\mathbb F}}
\newcommand{\Z}{{\mathbb Z}}
\newcommand{\R}{{\mathbb R}}
\newcommand{\N}{{\mathbb N}}
\newcommand{\A}{{\mathbb A}}
\newcommand{\K}{{\mathbb K}}
\newcommand{\C}{{\mathbb C}}
\newcommand{\x}{{\mathbf{x}}}
\newcommand{\be}{\begin{eqnarray}}
\newcommand{\ee}{\end{eqnarray}}
\newcommand{\nn}{{\nonumber}}
\newcommand{\dd}{\displaystyle}
\newcommand{\ra}{\rightarrow}

\newcommand{\SqBinom}[2]{\genfrac{[}{]}{0pt}{}{#1}{#2}}
\newcommand{\Binom}[2]{\genfrac{(}{)}{0pt}{}{#1}{#2}}

\newcommand{\Keywords}[1]{\par\noindent
{\small{\it Keywords\/}: #1}}

\title[Complete $b$-Symbol Weight Distribution of some Irreducible Cyclic Codes]{Complete $b$-Symbol Weight Distribution of Some Irreducible Cyclic Codes}

\author[ Hongwei Zhu \& Minjia Shi \&  Ferruh \"{O}zbudak 
]{Hongwei Zhu \& Minjia Shi \&  Ferruh \"{O}zbudak  
}
\thanks{MSC codes: 94B14, 11T71, 94B27}

\thanks{Hongwei Zhu is with School of Mathematical Sciences, Anhui University, Hefei, China. E-mail: zhwgood66@163.com.}

\thanks{Minjia Shi is with Key Laboratory of Intelligent Computing Signal Processing, Ministry of Education,  School of Mathematical Sciences, Anhui University, Hefei, Anhui, 230601, China, e-mail: smjwcl.good@163.com}

\thanks{Ferruh \"{O}zbudak is with Department of Mathematics and Institute of Applied Mathematics, Middle East Technical
	University,   Ankara, Turkey;
	e-mail: ozbudak@metu.edu.tr}


\abstract
Recently, $b$-symbol codes are proposed to protect against $b$-symbol errors in $b$-symbol read channels. It is an interesting subject of study  to consider the complete $b$-symbol weight distribution of cyclic codes since $b$-symbol metric is a generalization for Hamming metric. The complete $b$-symbol Hamming weight distribution of irreducible codes is known in only a few cases. In this paper, we give a complete $b$-symbol Hamming weight distribution of a class of irreducible codes with two nonzero $b$-symbol Hamming weights.

\vspace{0.7cm}
\Keywords{cyclic code, $b$-symbol error, $b$-symbol Hamming weight distribution,  irreducible cyclic code}
\endabstract
\maketitle

\section{Introduction}\label{sec:1}
In traditional information theory, people often analyze noise channels by dividing information into independent information units. However, with the development of storage technology, one finds that symbols can not always be written or read continuously. In 2011, Cassuto and Blaum \cite{CB1,CB2} first proposed a new coding framework for read channel based on symbol-pair. In this channel, the output of the reading process is the overlapping symbol-pair of the read channel. Symbol-pair codes are designed to correct symbol-pair errors.
Later, Cassuto and Litsyn \cite{CL} studied the symbol-pair codes corresponding to the cyclic codes and gave the lower bound for symbol-pair distance of cyclic codes through discrete Fourier transform and BCH bound.
Next, Chee {\it et al}. \cite{C+,C+1} established the Singleton-type bound of codes with symbol-pair metric, and considered the construction of symbol-pair codes reaching this bound. More related results on this topic can be found in \cite{CLL,DGZ,DTG,KZL}.
Recently, Yaakobi {\it et al}. \cite{YBS} generalized the symbol-pair read channel to the $b$-symbol read channel ($2$-symbol is exactly symbol-pair) and gave a decoding algorithm based on a bounded distance decoder for the cyclic code, where $b \geq 1$. When $b=1$, the $1$-symbol weight is exactly Hamming weight. Ding {\it et al.} \cite{DTG} established the Singleton-type bound for $b$-symbol codes, and constructed several infinite families of linear maximum distance separable (MDS) $b$-symbol codes. In the $b$-symbol read channel, one tends to consider the cyclic codes. Important motivations for considering cyclic codes include that cyclic shifting does not change the $b$-symbol weight of the codeword, there are good decoding algorithms and cyclic codes have good algebraic
 structure.
Hence it is natural to consider $b$-symbol weight distribution of cyclic codes. It is well known that the problem of determining the weight distribution of an irreducible cyclic code is notoriously difficult in general. Similar difficulty holds in the case of arbitrary cyclic codes and also for their $b$-symbol weights, in general. There are only few cases that the $b$-symbol weight distribution of cyclic codes are completely determined. Sun {\it et al.} \cite{Z} gave the $2$-symbol distance distribution of a class of repeated-root cyclic codes over finite fields. Using geometric approach, Shi {it et al.} \cite{SOS} obtain, among other results,  tight lower and upper bounds on $b$-symbol Hamming weight of arbitrary cyclic codes and a class of irreducible cyclic codes with constant $b$-symbol Hamming weight.

In this paper we consider a natural next step after \cite{SOS}. We completely determine the $b$-symbol weight distribution of irreducible cyclic codes of length $n$ over $\F_q$ if $\gcd\left( \frac{q^r-1}{q-1},\frac{q^r-1}{n}\right)=2$ with $n \mid \left(q^r-1\right)$ and $r \ge 2$. These results correspond to two-weight cyclic codes over the alphabet $\F_q \times  \cdots \times \F_q=\F_q^b$, which is not a field (see Remark \ref{remark.2.weight} below). It turns out that the complete determination of
$b$-symbol weights of these irreducible cyclic codes are more difficult and we obtain our results up to a new invariant $\mu(b)$ that we introduce in Definition \ref{definition.mu.b} below. We find that $\mu(b)$ is a natural arithmetic and geometric invariant and we determine it exactly if $b=r$. We also obtain numerical results for $2 \le b < r$.

Our main results are Theorems \ref{theorem.complete.b.symbol.C.b.small} and \ref{theorem.complete.b.symbol.C.b.large}. These results are stronger than the $b$-symbol weight distributions as we determine the $b$-symbol weight of any given codeword for the irreducible cyclic codes of this paper. Another important problem in this area is construction of maximum distance separable (MDS) $b$-symbol codes. There are interesting constructions in the recent literature using various methods, for example \cite{C+,C+1,DGZ,DTG,KZL}. As a consequence of Theorem \ref{theorem.complete.b.symbol.C.b.large} we observe that all of the irreducible cyclic codes we study here are MDS $b$-symbol if $b=r$.

This paper is organized as follows. In section 2, we introduce basic notations and definitions. In section 3, we show the main results. In section 4, we give the detailed proof of the main results. In the appendix, using exponent sum, we give another proof to show the equality (\ref{ep9.proof.Theorem.small.b}) holds.
\section{Preliminaries}\label{sec:preliminaries}

Throughout this paper we assume  and fix the following:

\begin{itemize}

\item $\F_q$: finite field with $q$ elements.  \\

\item $\F_q^*$: $\F_q\backslash\{0\}$.\\

\item $q=p^e$, $p=\Char \F_q$ and $p$ is odd. \\

\item $r \ge 2$: an {\it even} integer. \\

\item $nN=q^r-1$, where $n,N$ are positive integers. \\

\item $\gcd\left(\frac{q^r-1}{q-1},N\right)=2$. \\

\item $2 \le b \le n-1$ : an integer. \\

\item $\eta \in \F_{q^r}$: a primitive $(q^r-1)$-th root of $1$, or equivalently a primitive element of $\F_{q^r}$. \\
\item $\Tr:\F_{q^r} \ra \F_q$: the trace map defined as $x \mapsto x+ x^q + \cdots +x^{q^{r-1}}$.
\end{itemize}

Denote by $w(\mathbf{x})$ or $w_H(\mathbf{x})$ the Hamming weight of $\mathbf{x}\in \F_q^n.$
The $b$-symbol Hamming weight $w_b(\x)$ of $\x=(x_0,\ldots,x_{n-1})\in \F_q^n$ is defined as the Hamming weight of $\pi_b(\x)$, where
\be \label{definition.pi.b}
\pi_b(\x)=((x_0,\ldots,x_{b-1}),(x_{1},\ldots,x_{b}),
\cdots,(x_{n-1},\ldots,x_{b+n-2({\rm mod}~n)}))
\ee
is in $(\F_q^b)^n.$
When $b=1$, $w_1(\x)$ is exactly the Hamming weight of $\x$.
For any $\x,\mathbf{y}\in \F_q^N$, we have $\pi_b(\x+\mathbf{y})=\pi_b(\x)+\pi_b(\mathbf{y})$, and
the $b$-symbol distance ($b$-distance for short) $d_b(\x,\mathbf{y})$ between
$\x$ and $\mathbf{y}$ is defined as $d_b(\x,\mathbf{y})=w_b(\x-\mathbf{y}).$  Let $A_i^{(b)}$ denote the number of codewords with
 $b$-symbol Hamming weight $i$ in a code $C$ of length $n$. The $b$-symbol Hamming weight enumerator of $C$ is defined by
\begin{equation*}\label{b-enumerator}
  1+A_1^{(b)}T+A_2^{(b)}T^2+\cdots+A_n^{(b)}T^{n}.
\end{equation*}
Ding {\it et al}. \cite{DTG} established a Singleton-type bound for $b$-symbol codes.
Let $q\geq 2$ and $b\leq d_b(C)\leq n$. If $C$ is an $(n,M,d_b(C))_q$ $b$-symbol code, then we have $M\leq q^{n-d_b(C)+b}$.
An $(n,M,d_b(C))_q$ $b$-symbol code $C$ with $M=q^{n-d_b(C)+b}$ is called a maximum distance separable (MDS for short) $b$-symbol code.

For $a \in \F_{q^r}$, let $c(a) \in \F_q^{n}$ be the codeword defined as
\be
c(a)=\left( \Tr( a \eta^{0 \cdot N}), \Tr(a \eta^{1 \cdot N}), \ldots, \Tr(a \eta^{j \cdot N}), \ldots, \Tr(a \eta^{(n-1)\cdot N})\right),
\nn\ee
where $0 \le j \le n-1$.

Let $C$ be the $\F_q$-linear code of length $n$ defined as
\be
C= \left\{c(a): a \in \F_{q^r}\right\}.
\nn\ee

The following simple lemma is useful.

\begin{lemma} \label{lemma.C.irreducible}
$C$ is the irreducible code of length $n$ over $\F_q$ and $\dim_{\F_q} C=r$.
\end{lemma}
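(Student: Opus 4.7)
The plan is to realize $C$ as the image of the $\F_q$-linear map $T\colon \F_{q^r} \to \F_q^n$, $a \mapsto c(a)$, and then to show (i) $T$ is injective, yielding $\dim_{\F_q} C = r$, and (ii) the check polynomial of $C$ is irreducible of degree $r$ over $\F_q$. First I would check cyclicity directly: the cyclic shift of $c(a)$ is $c(a\eta^N)$, so $C = \Ima T$ is closed under shifts. The dimension claim is equivalent to $\Ker T = 0$.

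An element $a$ lies in $\Ker T$ iff $\Tr(a\eta^{jN}) = 0$ for every $j \in \{0,1,\dots,n-1\}$. Because $\eta^N \in \F_{q^r}^*$ has multiplicative order $n$, the set $\{\eta^{jN} : 0 \le j \le n-1\}$ is exactly the group of $n$-th roots of unity in $\F_{q^r}^*$, whose $\F_q$-span is the subfield $\F_q(\eta^N)$. By nondegeneracy of the trace form on $\F_{q^r}/\F_q$, we then have $\Ker T = 0$ if and only if $\F_q(\eta^N) = \F_{q^r}$, equivalently the multiplicative order of $q$ modulo $n$ equals $r$.

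The arithmetic heart of the argument is to extract this order condition from the hypothesis $\gcd\!\left(\tfrac{q^r-1}{q-1}, N\right) = 2$. I would argue by contradiction: suppose $q^s \equiv 1 \pmod n$ for some proper divisor $s$ of $r$. Then $n \mid q^s-1$, so $M := \tfrac{q^r-1}{q^s-1} = 1 + q^s + q^{2s} + \cdots + q^{(r/s-1)s}$ divides $N$, and obviously $M$ also divides $\tfrac{q^r-1}{q-1}$. Since $p$ is odd, $q \ge 3$, and therefore $M \ge 1 + q^s \ge 1+q \ge 4$, contradicting that the gcd equals $2$. Hence the order of $q$ mod $n$ is exactly $r$, so $\F_q(\eta^N)=\F_{q^r}$, $\Ker T = 0$, and $\dim_{\F_q} C = r$. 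This is the step I expect to be the main obstacle; the rest is bookkeeping.

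For irreducibility, I would invoke the standard fact that the parity-check polynomial of the cyclic code $\{c(a) : a \in \F_{q^r}\}$ is the minimal polynomial over $\F_q$ of $\eta^{-N}$, whose roots form the single $q$-cyclotomic coset of $-1$ modulo $n$. Its degree equals the order of $q$ modulo $n$, which we just showed is $r$, and this matches $\dim_{\F_q} C = r$. Since the minimal polynomial is $\F_q$-irreducible, the check polynomial of $C$ is irreducible, so $C$ is an irreducible cyclic code.
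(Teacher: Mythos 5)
The paper offers no proof of this lemma (it is introduced only as ``The following simple lemma is useful''), so there is no in-paper argument to compare against; judged on its own, your proof is correct and complete. The decisive step is exactly where you locate it: deducing from $\gcd\bigl(\tfrac{q^r-1}{q-1},N\bigr)=2$ that the multiplicative order of $q$ modulo $n$ is $r$. Your contradiction is sound: if the order were a proper divisor $s$ of $r$, then $M=\tfrac{q^r-1}{q^s-1}$ would divide both $N$ and $\tfrac{q^r-1}{q-1}$, hence divide their gcd, which is impossible since $M\ge 1+q^s\ge 4$ for odd $p$. The remaining ingredients --- nondegeneracy of the trace form to identify $\Ker T$ with the orthogonal complement of $\F_q(\eta^N)$, the shift identity $c(a)\mapsto c(a\eta^N)$ for cyclicity, and the standard fact that the check polynomial of the trace code is the (irreducible) minimal polynomial of $\eta^{-N}$ --- are all applied correctly, and the dimension count $\dim_{\F_q}C=r$ follows. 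This is a legitimate filling of a gap the authors left implicit.
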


Next we introduce an important invariant $\mu(b)$ of the extension $\F_{q^r}/\F_q$. First we introduce
a subset $\mathcal{P}(b)$ of $\F_{q^r}^*$, which we use in the definition of $\mu(b)$ below.

\begin{definition} \label{definition.set.P.b}
Let $2 \le b \le r$ be an integer. It follows from \cite[Corollary 4.1]{SOS} that the set $\left\{1,\eta^N, \eta^{2N}, \ldots, \eta^{(b-1)N}\right\}$ is linearly independent over $\F_q$. Let $\mathcal{P}(b)$ be the subset of cardinality $(q^b-1)/(q-1)$ in $\F_{q^r}^*$ defined as
\be
\begin{array}{rcl}
\mathcal{P}(b) & = & \dd \bigcup_{j=1}^{b-1} \left\{\eta^{(j-1)N} + x_1 \eta^{jN} + \cdots + x_{b-j} \eta^{(b-1)N}: x_1, \cdots, x_j \in \F_q \right\}\dd \cup \left\{ \eta^{(b-1)N} \right\}.
\end{array}
\nn\ee
\end{definition}

\begin{example} For $r \ge 3$ we have
\be
\mathcal{P}(2)= \left\{1 + x_1 \eta^N: x_1 \in \F_q\right\} \cup \left\{\eta^N\right\}
\nn\ee
and
\be
\begin{array}{rcl}
\mathcal{P}(3) & = & \dd \left\{1 + x_1 \eta^{N} + x_2 \eta^{2N}: x_1, x_2 \in \F_q \right\} \dd \cup \left\{\eta^{N} + x_1 \eta^{2N}: x_1 \in \F_q \right\}  \dd \cup \left\{ \eta^{2N} \right\}.
\end{array}
\nn\ee
\end{example}
Now we are ready to define the invariant $\mu(b)$.

\begin{definition} \label{definition.mu.b}
For $2 \le b \le r$, let
\be
\mu(b)=\left|\left\{ x  \in \mathcal{P}(b): x \; \mbox{is a square in} \; \F_{q^r}^*\right\}\right|.
\nn\ee
\end{definition}

It seems difficult to determine $\mu(b)$ exactly when $b < r$. We determine it exactly when $b=r$ in the following.
\begin{lemma} \label{lemma.determine.mu.b.is.r}
Under notation as above, we have
\be
\mu(r)=\frac{1}{2}\left|\mathcal{P}(b)\right|=\frac{q^r-1}{2(q-1)}.
\nn\ee
\end{lemma}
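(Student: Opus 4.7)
The plan is to reduce the problem to two facts: $\mathcal{P}(r)$ is a complete set of coset representatives for the quotient group $\F_{q^r}^*/\F_q^*$, and every nonzero element of $\F_q$ is a square in $\F_{q^r}^*$. Once both are in hand, the property ``$x$ is a square'' is constant on each coset $x\F_q^*$, so the squares in $\mathcal{P}(r)$ correspond bijectively to those cosets that lie entirely in $(\F_{q^r}^*)^2$; a one-line count then delivers the formula.

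For the first fact, I would invoke \cite[Corollary 4.1]{SOS} to see that $\{1,\eta^N,\eta^{2N},\ldots,\eta^{(r-1)N}\}$ is $\F_q$-linearly independent; since it has $r$ elements it is an $\F_q$-basis of $\F_{q^r}$. Every $y \in \F_{q^r}^*$ has a unique expansion $y=\sum_{i=0}^{r-1}a_i\eta^{iN}$; if $j$ denotes the smallest index with $a_j\neq 0$, then $a_j^{-1}y$ is the unique member of the coset $y\F_q^*$ whose leading coefficient (in position $j$) equals $1$. Inspection of the definition of $\mathcal{P}(r)$ shows that these normalized representatives, taken over all admissible tuples $(j,a_{j+1},\ldots,a_{r-1})$, constitute exactly $\mathcal{P}(r)$. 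Hence $\mathcal{P}(r)$ is a transversal for $\F_{q^r}^*/\F_q^*$, matching the expected size $(q^r-1)/(q-1)$.

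For the second fact, $\F_q^*=\langle \eta^{(q^r-1)/(q-1)}\rangle$, so it suffices to observe that the exponent $(q^r-1)/(q-1)=1+q+q^2+\cdots+q^{r-1}$ is a sum of $r$ odd numbers ($q$ is a power of the odd prime $p$) with $r$ even; hence the exponent is even and $\F_q^*\subseteq(\F_{q^r}^*)^2$. Consequently the quadratic character on $\F_{q^r}^*$ is $\F_q^*$-invariant: each coset $x\F_q^*$ lies either entirely in $(\F_{q^r}^*)^2$ or entirely outside it, and the number of square cosets equals $|(\F_{q^r}^*)^2|/|\F_q^*|=(q^r-1)/(2(q-1))$. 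Since $\mathcal{P}(r)$ contains exactly one representative from each coset, we obtain $\mu(r)=(q^r-1)/(2(q-1))$. I do not foresee a real obstacle here; the only point requiring care is the verification that $\mathcal{P}(r)$ is a genuine transversal, which follows directly from the leading-coordinate normalization in the chosen basis.
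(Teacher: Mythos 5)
Your proof is correct and follows essentially the same route as the paper's: both rest on the decomposition $\F_{q^r}^* = \bigsqcup_{x\in\mathcal{P}(r)} x\F_q^*$ together with the fact that $\F_q^*\subseteq(\F_{q^r}^*)^2$ because $r$ is even, and then count the square cosets to get $(q^r-1)/(2(q-1))$. The only difference is that you spell out the two verifications (the transversal property via leading-coefficient normalization in the basis $\{1,\eta^N,\ldots,\eta^{(r-1)N}\}$, and the parity of $(q^r-1)/(q-1)=1+q+\cdots+q^{r-1}$) that the paper merely asserts.
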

\begin{proof}
First note that $(q^r-1)/(q-1)$ is an even integer as $r$ is even. Moreover any element of $\F_q^*$ is a square in $\F_{q^r}^*$ as $r$ is even. It follows from Definition \ref{definition.set.P.b} that
\be
\F_{q^r}^* = \bigsqcup_{x \in \mathcal{P}(b)} x \F_q^*,
\nn\ee
where $\bigsqcup$ is a disjoint union. Hence $x \in \mathcal{P}(b)$ is a square in $\F_{q^r}^*$ if and only if each element of $x \F_q^*$ is a square in $\F_{q^r}^*$. This implies that
\be
\mu(b)(q-1)=\frac{q^r-1}{2},
\nn\ee
which is the number of squares in $\F_{q^r}^*$. This completes the proof.
\end{proof}
\begin{example}\label{ep2.2}
Here are some numerical examples about $\mu(b)$ which computed by Magma.\\
\begin{center}
\begin{tabular}{l|c|c|c|c|c}
  \hline
  $p$ & $q$ & $r$ & $N$ & $b$ & $\mu(b)$ \\
  \hline
  $3$ & $3$ & $2$ & $2$ & $2$ & $2=\frac{q^r-1}{2(q-1)}$ \\
  $3$ & $3$ & $4$ & $2$ & $2$ & $3$ \\
  $3$ & $3$ & $4$ & $2$ & $3$ & $8$ \\
  $3$ & $3$ & $4$ & $2$ & $4$ & $20=\frac{q^r-1}{2(q-1)}$ \\
  $5$ & $5$ & $2$ & $2$ & $2$ & $3=\frac{q^r-1}{2(q-1)}$ \\
  $5$ & $5$ & $4$ & $2$ & $2$ & $4$ \\
  $5$ & $5$ & $4$ & $2$ & $3$ & $18$ \\
  $5$ & $5$ & $4$ & $2$ & $4$ & $78=\frac{q^r-1}{2(q-1)}$ \\
  $3$ & $9$ & $2$ & $2$ & $2$ & $5=\frac{q^r-1}{2(q-1)}$ \\
  $3$ & $9$ & $4$ & $2$ & $2$ & $4$ \\
  $3$ & $9$ & $4$ & $2$ & $3$ & $50$ \\
  $3$ & $9$ & $4$ & $2$ & $4$ & $410=\frac{q^r-1}{2(q-1)}$ \\
  $3$ & $9$ & $6$ & $2$ & $2$ & $4$\\
  $3$ & $9$ & $6$ & $2$ & $3$ & $51$\\
  $3$ & $9$ & $6$ & $2$ & $4$ & $401$\\
  $3$ & $9$ & $6$ & $2$ & $5$ & $3728$\\
  $3$ & $9$ & $6$ & $2$ & $6$ & $33215=\frac{q^r-1}{2(q-1)}$ \\
  $5$ & $25$ & $2$ & $2$ & $2$ & $13=\frac{q^r-1}{2(q-1)}$ \\
  $5$ & $25$ & $4$ & $2$ & $2$ & $11$ \\
  $5$ & $25$ & $4$ & $2$ & $3$ & $338$ \\
  $5$ & $25$ & $4$ & $2$ & $4$ & $8138=\frac{q^r-1}{2(q-1)}$ \\
  \hline
\end{tabular}
\end{center}

\end{example}

According to Example \ref{ep2.2}, when $2\leq b<r,$ we notice that the value of $\mu(b)$ is close to, but not equal to, $\frac{q^b-1}{2(q-1)}.$ It is natural for us to arise such an open question as follows.

\begin{open problem} \label{open problem}
Determine the invariant $\mu(b)$ when $2\leq b<r$ or give  {\it good} lower and upper bounds to $\mu(b)$.
\end{open problem}
\section{The main results}
Next we state our first main result. We completely determine the $b$-symbol Hamming weight of a given arbitrary nonzero codeword of $C$ for $2 \le b < r$. Recall that $q=p^e$, $p=\Char \F_q$ and $p$ is odd.

\begin{theorem} \label{theorem.complete.b.symbol.C.b.small}
Let $a \in \F_{q^r}^*$. Assume that $2 \le b < r$. Then we determine $w_b(c(a))$ explicitly as follows:
\begin{itemize}
\item If $p \equiv 1 \mod 4$ and $a$ is a square in $\F_{q^r}^*$,  then
\be
\begin{array}{rcl}
w_b(c(a)) & = & \dd \frac{q^b-1}{N(q-1)q^{b-1}} \left( q^r - \frac{q^r+(q-1)q^{r/2}}{q}\right)\dd +\frac{2 \mu(b) (q-1) q^{r/2}}{Nq^b}.
\end{array}
\nn\ee
\item If $p \equiv 1 \mod 4$ and $a$ is a non-square in $\F_{q^r}^*$, then
\be
\begin{array}{rcl}
w_b(c(a)) & = & \dd \frac{q^b-1}{N(q-1)q^{b-1}} \left( q^r - \frac{q^r-(q-1)q^{r/2}}{q}\right) \dd -\frac{2 \mu(b) (q-1) q^{r/2}}{Nq^b}.
\end{array}
\nn\ee
\item If $p \equiv 3 \mod 4$ and $a$ is a square in $\F_{q^r}^*$, then
\be
\begin{array}{rcl}
w_b(c(a)) & = & \dd \frac{q^b-1}{N(q-1)q^{b-1}} \left( q^r - \frac{q^r+(-1)^{er/2}(q-1)q^{r/2}}{q}\right) \dd +\frac{2 \mu(b) (-1)^{er/2}(q-1) q^{r/2}}{Nq^b}.
\end{array}
\nn\ee
\item If $p \equiv 3 \mod 4$ and $a$ is a non-square in $\F_{q^r}^*$, then
\be
\begin{array}{rcl}
w_b(c(a)) & = & \dd \frac{q^b-1}{N(q-1)q^{b-1}} \left( q^r - \frac{q^r-(-1)^{er/2}(q-1)q^{r/2}}{q}\right)  \dd -\frac{2 \mu(b) (-1)^{er/2}(q-1) q^{r/2}}{Nq^b}.
\end{array}
\nn\ee
\end{itemize}
\end{theorem}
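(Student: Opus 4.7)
My plan is to reduce $w_b(c(a))$ to a quadratic Gauss sum over $\F_{q^r}$ by combining additive character orthogonality with the observation that the hypothesis $\gcd((q^r-1)/(q-1),N)=2$ forces the subgroup $H\F_q^*$, where $H=\langle\eta^N\rangle$, to coincide with the squares $(\F_{q^r}^*)^2$.

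First, the $j$-th $b$-block of $c(a)$ vanishes precisely when $a\eta^{jN}$ lies in the $\F_q$-dual (with respect to the trace pairing) of $V_b:=\Span_{\F_q}\{1,\eta^N,\ldots,\eta^{(b-1)N}\}$, which has $\F_q$-dimension $b$ by \cite[Corollary~4.1]{SOS}. Expanding the indicator $\mathbf{1}[\Tr(x)=0]=q^{-1}\sum_{c\in\F_q}\psi(c\Tr(x))$ through a fixed nontrivial additive character $\psi$ of $\F_q$ gives
\[
n-w_b(c(a)) \;=\; \frac{n}{q^b}+\frac{1}{q^b}\sum_{v\in V_b\setminus\{0\}}S(av), \qquad S(y):=\sum_{h\in H}\psi\bigl(\Tr(yh)\bigr).
\]
For $y\in\F_{q^r}^*$ one has $\sum_{c\in\F_q^*}S(cy)=|H\cap\F_q^*|\sum_{z\in yH\F_q^*}\psi(\Tr z)$. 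Since $r$ is even, $\F_q^*\subset(\F_{q^r}^*)^2$; since the gcd hypothesis forces $N$ even, $H\subset(\F_{q^r}^*)^2$; and a direct order computation giving $\gcd(n,q-1)=2(q-1)/N$ yields $|H\F_q^*|=(q^r-1)/2$, so $H\F_q^*=(\F_{q^r}^*)^2$. Using the standard quadratic-character identities $\sum_{z\text{ sq.}}\psi(\Tr z)=(g-1)/2$ and $\sum_{z\text{ non-sq.}}\psi(\Tr z)=(-g-1)/2$, where $g$ is the quadratic Gauss sum on $\F_{q^r}$ and $\chi$ its quadratic character, this collapses to
\[
\sum_{c\in\F_q^*}S(cy)\;=\;\frac{q-1}{N}\bigl(\chi(y)\,g-1\bigr).
\]

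Next I would decompose $V_b\setminus\{0\}=\bigsqcup_{x\in\mathcal{P}(b)}\F_q^*\cdot x$ with $|\mathcal{P}(b)|=(q^b-1)/(q-1)$, so that $\sum_{v\neq 0}S(av)=\sum_{x\in\mathcal{P}(b)}\sum_{c\in\F_q^*}S(cax)$. Substituting the previous display together with $\chi(ax)=\chi(a)\chi(x)$ and the count $\sum_{x\in\mathcal{P}(b)}\chi(x)=2\mu(b)-|\mathcal{P}(b)|$ (which is exactly Definition \ref{definition.mu.b}) yields the uniform formula
\[
w_b(c(a))\;=\;\frac{(q^b-1)\bigl(q^r+g\,\chi(a)\bigr)}{N q^b}\;-\;\frac{2(q-1)\mu(b)\,g\,\chi(a)}{N q^b}.
\]
The four stated formulas then follow by inserting the Hasse--Davenport evaluation of $g$ for the quadratic character on $\F_{q^r}=\F_{p^{er}}$ (with $er$ even), namely $g=-q^{r/2}$ if $p\equiv 1\pmod 4$ and $g=-(-1)^{er/2}q^{r/2}$ if $p\equiv 3\pmod 4$, together with $\chi(a)=\pm 1$ according as $a$ is a square or not, and rewriting $(q^b-1)(q^r\mp q^{r/2})/(Nq^b)$ in the more elaborate form displayed in the theorem.

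The main technical hurdle is the identification $H\F_q^*=(\F_{q^r}^*)^2$: this is where the hypothesis $\gcd((q^r-1)/(q-1),N)=2$ is truly used, and without it the character sum $\sum_{c\in\F_q^*}S(cy)$ would run over a general multiplicative subgroup rather than reduce to a quadratic Gauss sum, destroying the clean two-valued dependence on $\chi(a)$. Once that reduction is in place, the remainder is straightforward character arithmetic together with careful sign bookkeeping across the four cases determined by $p\bmod 4$ and whether $a$ is a square.
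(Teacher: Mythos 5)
Your proposal is correct and follows essentially the same route as the paper: the same decomposition of $V_b\setminus\{0\}$ into the $\F_q^*$-lines indexed by $\mathcal{P}(b)$, the same identification of $\F_q^*H$ with the squares of $\F_{q^r}^*$ via the order computation $\gcd(n,q-1)=2(q-1)/N$ (the paper's Lemmas \ref{lemma1.proof.theorem.b.small} and \ref{lemma2.proof.theorem.b.small}), the same quadratic Gauss sum evaluations, and the same count $\sum_{x\in\mathcal{P}(b)}\chi(x)=2\mu(b)-|\mathcal{P}(b)|$. The only organizational difference is that you evaluate the full character sum over $V_b\setminus\{0\}$ in one pass, which amounts to inlining the exponential-sum proof of the key identity (\ref{ep9.proof.Theorem.small.b}) given in the appendix, whereas the main text quotes that identity from \cite{SOS} and then sums the Hamming weights $w(\hat{c}(\theta a))$ over $\theta\in\mathcal{P}(b)$.
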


\begin{remark} \label{remark.2.weight}
The Hamming weight distribution of the above cyclic codes has been considered in \cite{Ding}, and $C$ is a two-weight code  under the Hamming metric. In this paper we consider $b$-symbol weight distribution of such codes. Using the map $\pi_b$ in (\ref{definition.pi.b}), the problem becomes Hamming weight distribution of some $2$-weight cyclic codes over the alphabet $\F_q \times \cdots \F_q=\F_q^b$, which is not a field. We remark that two-weight irreducible cyclic codes {\it over finite fields} were characterized in \cite{SW}, and it would be interesting to obtain such a characterization over the  alphabet $\F_q^b$. We think that this would be related to Open Problem \ref{open problem} above.
\end{remark}

It remains to consider $r \le b < n$ for $C$. Using \cite[Theorem 4.2]{SOS} and Lemma \ref{lemma.determine.mu.b.is.r} we obtain a nice formula in the following.

\begin{theorem} \label{theorem.complete.b.symbol.C.b.large}
Let $a \in \F_{q^r}^*$. For $r \le b < n$ we have
\be
w_b(c(a))=n.
\nn\ee
\end{theorem}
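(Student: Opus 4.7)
The plan is to reduce the statement to the single boundary case $b = r$, where the formula from \cite[Theorem 4.2]{SOS} (which specialises to the four-case expression already appearing in Theorem \ref{theorem.complete.b.symbol.C.b.small}) can be evaluated explicitly using Lemma \ref{lemma.determine.mu.b.is.r}. First I would observe a simple monotonicity: if the window $(x_i, x_{i+1}, \ldots, x_{i+b})$ of length $b+1$ is the zero vector then so is the length-$b$ window $(x_i, \ldots, x_{i+b-1})$, hence the number of all-zero windows is non-increasing in $b$, and therefore $w_{b+1}(\x) \ge w_b(\x)$. Combined with the trivial upper bound $w_b(c(a)) \le n$, it suffices to prove the theorem at $b = r$.

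For $b = r$, I would invoke \cite[Theorem 4.2]{SOS}, which expresses $w_r(c(a))$ in the same four-case form as Theorem \ref{theorem.complete.b.symbol.C.b.small} depending on the parity of $p \bmod 4$ and on whether $a$ is a square. The key input is Lemma \ref{lemma.determine.mu.b.is.r}, which gives $\mu(r) = \frac{q^r-1}{2(q-1)}$. Substituting this value, the quadratic-character contribution $\pm \frac{2\mu(r)(q-1)q^{r/2}}{Nq^r}$ precisely cancels the $\mp (q-1)q^{r/2}$ correction inside the main term, and the four cases collapse to the common value
\[
w_r(c(a)) \;=\; \frac{q^r-1}{Nq^r}\bigl(q^r - q^{r/2} + q^{r/2}\bigr) \;=\; \frac{q^r-1}{N} \;=\; n.
\]

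As a sanity check (and an independent route if [SOS, Theorem 4.2] were not available), $w_b(c(a)) = n$ is equivalent to asserting that $c(a)$ contains no run of $b$ consecutive zeros, i.e.\ there is no index $i$ with $\Tr(a\eta^{iN}) = \Tr(a\eta^{(i+1)N}) = \cdots = \Tr(a\eta^{(i+b-1)N}) = 0$. By the $\F_q$-linear independence of $\{1,\eta^N,\ldots,\eta^{(r-1)N\}}$ recalled in Definition \ref{definition.set.P.b} (itself \cite[Corollary 4.1]{SOS}), this set is an $\F_q$-basis of $\F_{q^r}$, and for $b \ge r$ the set $\{1,\eta^N,\ldots,\eta^{(b-1)N}\}$ still spans $\F_{q^r}$ over $\F_q$. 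Non-degeneracy of the trace bilinear form then forces $a\eta^{iN} = 0$, contradicting $a \in \F_{q^r}^*$.

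The main (and only) obstacle is the algebraic bookkeeping in the $b = r$ evaluation, in particular verifying that the $\pm$-signs from the quadratic character and from the case $p \equiv 3 \pmod 4$ (where the factor $(-1)^{er/2}$ enters) cancel uniformly across all four cases. This is routine because the same sign appears in both the main term and the $\mu$-correction, so the two contributions telescope regardless of the case, leaving the clean answer $n$.
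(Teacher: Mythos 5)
Your proposal is correct, and its core ($b=r$) computation is exactly the paper's: substitute $\mu(r)=\frac{q^r-1}{2(q-1)}$ from Lemma \ref{lemma.determine.mu.b.is.r} into the four-case formula (whose derivation via (\ref{ep9.proof.Theorem.small.b}) is valid up to $b=r$, even though Theorem \ref{theorem.complete.b.symbol.C.b.small} is only stated for $b<r$) and watch the $\pm(q-1)q^{r/2}$ terms cancel. Where you differ is in the two reductions surrounding that computation, and both of your versions are arguably cleaner. For the passage from $b=r$ to $r<b<n$, the paper simply cites the proof of \cite[Theorem 4.2]{SOS} to assert $w_b(c(a))=w_r(c(a))$, whereas you prove the monotonicity $w_{b+1}(\x)\ge w_b(\x)$ directly from the definition of $\pi_b$ (a zero window of length $b+1$ contains a zero window of length $b$) and combine it with the trivial bound $w_b\le n$; this makes the step self-contained. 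More notably, your ``sanity check'' is in fact a complete independent proof of the whole theorem: since $\{1,\eta^N,\ldots,\eta^{(r-1)N}\}$ is an $\F_q$-basis of $\F_{q^r}$ (Definition \ref{definition.set.P.b} / \cite[Corollary 4.1]{SOS}), a run of $b\ge r$ consecutive zeros in $c(a)$ would force $\Tr(a\eta^{iN}v)=0$ for all $v\in\F_{q^r}$, hence $a\eta^{iN}=0$ by non-degeneracy of the trace form, a contradiction. That argument bypasses $\mu(r)$, the character-sum formulas, and all sign bookkeeping; the paper's route buys only the consistency check that the explicit weight formula does collapse to $n$ at $b=r$.
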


Combining Theorem \ref{theorem.complete.b.symbol.C.b.small}  and Theorem \ref{theorem.complete.b.symbol.C.b.large} we obtain the weight $b$-symbol Hamming weight enumerators of $C$  completely.

\begin{corollary} \label{corollary.weight.enumerator}
For $2 \le b \le r-1$, the $b$-symbol Hamming weight enumerator of $C$ is
\be
A(T) = 1 + \frac{q^r-1}{2}\left(T^{u_1} + T^{u_2}\right),
\nn\ee
where
\be
u_1=
\left\{
\begin{array}{lr}
\frac{q^b-1}{N(q-1)q^{b-1}} \left( q^r - \frac{q^r+(q-1)q^{r/2}}{q}\right)
+\frac{2 \mu(b) (q-1) q^{r/2}}{Nq^b} & \mbox{if $p \equiv 1 \mod 4$}, \\ \\

\frac{q^b-1}{N(q-1)q^{b-1}} \left( q^r - \frac{q^r+(-1)^{er/2}(q-1)q^{r/2}}{q}\right)
+\frac{2 \mu(b) (-1)^{er/2} (q-1) q^{r/2}}{Nq^b} & \mbox{if $p \equiv 3 \mod 4$},
\end{array}
\right.
\nn\ee
and
\be
u_2=
\left\{
\begin{array}{lr}
\frac{q^b-1}{N(q-1)q^{b-1}} \left( q^r - \frac{q^r-(q-1)q^{r/2}}{q}\right)
-\frac{2 \mu(b) (q-1) q^{r/2}}{Nq^b}& \mbox{if $p \equiv 1 \mod 4$}, \\ \\

\frac{q^b-1}{N(q-1)q^{b-1}} \left( q^r - \frac{q^r-(-1)^{er/2}(q-1)q^{r/2}}{q}\right)
-\frac{2 \mu(b) (-1)^{er/2} (q-1) q^{r/2}}{Nq^b}& \mbox{if $p \equiv 3 \mod 4$}.
\end{array}
\right.
\nn\ee

For $r \le b < n-1$, the $b$-symbol Hamming weight enumerator of $C$ is
\be
A(T) = 1 + (q^r-1)T^n.
\nn\ee
Moreover, $C$ is an MDS $b$-symbol code when $b=r$.
\end{corollary}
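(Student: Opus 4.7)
The plan is to derive the corollary directly from Theorems \ref{theorem.complete.b.symbol.C.b.small} and \ref{theorem.complete.b.symbol.C.b.large} together with a simple count of squares and non-squares in $\F_{q^r}^*$.

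First I would record that, by Lemma \ref{lemma.C.irreducible}, the map $a \mapsto c(a)$ is $\F_q$-linear with image $C$ of $\F_q$-dimension $r$, and hence is an $\F_q$-linear bijection $\F_{q^r} \to C$. In particular the $q^r-1$ nonzero codewords of $C$ are in one-to-one correspondence with the elements of $\F_{q^r}^*$, so counting codewords of a given $b$-symbol weight is the same as counting the $a \in \F_{q^r}^*$ producing that weight.

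For $2 \le b \le r-1$, Theorem \ref{theorem.complete.b.symbol.C.b.small} shows that $w_b(c(a))$ takes only two values on $\F_{q^r}^*$, depending solely on whether $a$ is a square or a non-square, and in the two cases $p \equiv 1 \mod 4$ and $p \equiv 3 \mod 4$ these two values are exactly the $u_1, u_2$ of the statement. Since $r$ is even, $q^r-1$ is even and $\F_{q^r}^*$ contains exactly $(q^r-1)/2$ squares and $(q^r-1)/2$ non-squares, so the claimed enumerator $1 + \frac{q^r-1}{2}(T^{u_1}+T^{u_2})$ drops out; note that if it should happen that $u_1 = u_2$, this formula still gives the correct single-weight enumerator.

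For $r \le b < n-1$, Theorem \ref{theorem.complete.b.symbol.C.b.large} asserts $w_b(c(a)) = n$ for every $a \in \F_{q^r}^*$, so all $q^r-1$ nonzero codewords contribute $T^n$ and $A(T) = 1 + (q^r-1)T^n$. Finally, for $b = r$ we have $|C| = q^r$, length $n$, and $d_b(C) = n$, hence $q^{n - d_b(C) + b} = q^{n-n+r} = q^r = |C|$, which matches the Singleton-type bound for $b$-symbol codes recalled in Section \ref{sec:preliminaries} and certifies that $C$ is an MDS $b$-symbol code. There is essentially no technical obstacle here; the only point worth emphasizing is that the bijectivity coming from Lemma \ref{lemma.C.irreducible} is precisely what converts the pointwise weight formulas of the two theorems into codeword counts, and hence into the enumerator $A(T)$.
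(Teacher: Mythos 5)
Your proposal is correct and follows exactly the route the paper intends: the corollary is presented there as an immediate consequence of Theorems \ref{theorem.complete.b.symbol.C.b.small} and \ref{theorem.complete.b.symbol.C.b.large}, and your argument simply fills in the omitted bookkeeping (the bijection $a \mapsto c(a)$ from Lemma \ref{lemma.C.irreducible}, the count of $\frac{q^r-1}{2}$ squares and non-squares in $\F_{q^r}^*$, and the Singleton-type bound check for the MDS claim). No gaps.
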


\section{Proofs of the main results}
\subsection{Proof of Theorem \ref{theorem.complete.b.symbol.C.b.small}}
For $a \in \F_{q^r}$, let $\hat{c}(a) \in \F_q^{q^r-1}$ be the extended codeword of length $q^r-1$ defined as
\be
\hat{c}(a)=\left( \Tr( a \eta^{0 \cdot N}), \Tr(a \eta^{1 \cdot N}), \ldots, \Tr(a \eta^{j \cdot N}), \ldots, \Tr(a \eta^{(q^r-2)\cdot N})\right),
\nn\ee
where $0 \le j \le q^r-2$. As $\eta^{nN}=1$, we observe that
\be \label{ep1.proof.Theorem.small.b}
w(c(a))= \frac{1}{N}w(\hat{c}(a)) \;\; \mbox{and} \;\;
w_b(c(a))= \frac{1}{N}w_b(\hat{c}(a))
\ee
for each $2 \le b \le n-1$.

For $a \in \F_{q^r}$, let $Z(a) \in \C$ be defined as
\be \label{ep2.proof.Theorem.small.b}
Z(a)= \frac{1}{q} \sum_{y \in \F_q} \sum_{x \in \F_{q^r}} e^{\frac{ 2 \pi \sqrt{-1}}{p} \Tr_{q/p}\left(y \Tr_{q^r/q} (ax^N)\right)}.
\ee
For each $x \in \F_{q^r}^*$, we observe that
\be
\sum_{y \in \F_q}e^{\frac{ 2 \pi \sqrt{-1}}{p} \Tr_{q/p}\left(y \Tr_{q^r/q} (ax^N)\right)}
= \left\{
\begin{array}{rl}
q & \mbox{if $\Tr_{q^r/q} (ax^N)=0$}, \\
0 & \mbox{otherwise}.
\end{array}
\right.
\nn\ee
This implies that
\be \label{ep3.proof.Theorem.small.b}
w(\hat{c}(a))=q^r-Z(a).
\ee

Let $H \le \F_{q^r}^*$ be the multiplicative subgroup of order $\frac{q^r-1}{N}$. Using (\ref{ep2.proof.Theorem.small.b}) we obtain that
\be \label{ep4.proof.Theorem.small.b}
\begin{array}{rcl}
Z(a) & = & \dd \frac{1}{q} \left\{ q^r + \sum_{y \in \F_q^*} \sum_{x \in \F_{q^r}} e^{\frac{ 2 \pi \sqrt{-1}}{p} \Tr_{q^r/q} \left(y\Tr_{q^r/q} (ax^N)\right)} \right\} \\ \\

& = & \dd \frac{1}{q} \left\{ q^r + \sum_{y \in \F_q^*} \left[1 +  \sum_{x \in \F_{q^r}^*} e^{\frac{ 2 \pi \sqrt{-1}}{p} \Tr_{q^r/p} \left(ayx^N\right)}\right] \right\} \\ \\

& = & \dd \frac{1}{q} \left\{ q^r + q-1 + \sum_{y \in \F_q^*} \sum_{x \in \F_{q^r}^*} e^{\frac{ 2 \pi \sqrt{-1}}{p} \Tr_{q^r/p} \left(ayx^N\right)} \right\} \\ \\

& = & \dd \frac{1}{q} \left\{ q^r + q-1 + N \sum_{y \in \F_q^*} \sum_{z \in H} e^{\frac{ 2 \pi \sqrt{-1}}{p} \Tr_{q^r/p} \left(ayz\right)} \right\}
\end{array}
\ee
It is well known (see, for example, \cite[Theorem 5.15]{LN}) that
\be \label{ep5.proof.Theorem.small.b}
\sum_{
\substack{y \in \F_{q^r}^* \\ \mbox{$y$ is square}}}
e^{\frac{ 2 \pi \sqrt{-1}}{p} \Tr_{q^r/p} (y)}
= \left\{
\begin{array}{rl}
\frac{-q^{r/2}-1}{2} & \mbox{if $p \equiv 1 \mod 4$,} \\
\frac{-(-1)^{er/2} q^{r/2}-1}{2} & \mbox{if $p \equiv 3 \mod 4$,}
\end{array}
\right.
\ee
and
\be \label{ep6.proof.Theorem.small.b}
\sum_{
\substack{y \in \F_{q^r}^* \\ \mbox{$y$ is non-square}}}
e^{\frac{ 2 \pi \sqrt{-1}}{p} \Tr_{q^r/p} (y)}
= \left\{
\begin{array}{rl}
\frac{q^{r/2}-1}{2} & \mbox{if $p \equiv 1 \mod 4$,} \\
\frac{(-1)^{er/2}q^{r/2}-1}{2} & \mbox{if $p \equiv 3 \mod 4$,}
\end{array}
\right.
\ee
The following lemma is also useful.

\begin{lemma} \label{lemma1.proof.theorem.b.small}
Consider $\F_q^* \times H$ as a group with componentwise multiplicative operations. Let  $\phi$ be the map defined as
\be
\begin{array}{rcl}
\phi: \F_q^* \times H & \ra& \F_{q^r}^* \\
(y,z) & \mapsto & yz
\end{array}
\nn\ee
Then $\phi$ is a group homomorphism and $\Ker \phi=\{(h,h^{-1}): h \in H_0\}$, where $H_0 \le \F_{q^r}^*$ is the multiplicative subgroup of order $\frac{2(q-1)}{N}$.
\end{lemma}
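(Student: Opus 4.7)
The statement factors cleanly into two parts, and the plan is to dispatch each in turn.

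First, I would verify the homomorphism claim, which is essentially automatic: both $\F_q^* \times H$ (with componentwise multiplication) and $\F_{q^r}^*$ are abelian, and $\phi(y_1 y_2, z_1 z_2) = y_1 y_2 z_1 z_2 = (y_1 z_1)(y_2 z_2) = \phi(y_1,z_1)\phi(y_2,z_2)$ since $\F_q^* \subseteq \F_{q^r}^*$ and $H \subseteq \F_{q^r}^*$. This is a one-line check.

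The substantive part is identifying the kernel. From $\phi(y,z)=1$ I would read off $z=y^{-1}$ in $\F_{q^r}^*$, so $(y,z) \in \Ker\phi$ precisely when $y \in \F_q^*$ and $y^{-1} \in H$, i.e., $y \in \F_q^* \cap H$. Thus $\Ker\phi = \{(h,h^{-1}): h \in H_0\}$ with $H_0 := \F_q^* \cap H$. Both $\F_q^*$ and $H$ are subgroups of the cyclic group $\F_{q^r}^*$, hence their intersection is the unique cyclic subgroup of $\F_{q^r}^*$ of order $\gcd(q-1, |H|) = \gcd\!\left(q-1, \tfrac{q^r-1}{N}\right)$. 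It remains to show this gcd equals $\tfrac{2(q-1)}{N}$, which is the only place the hypothesis $\gcd\!\left(\tfrac{q^r-1}{q-1}, N\right)=2$ enters.

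For the gcd computation I would unpack the hypothesis by writing $M := \tfrac{q^r-1}{q-1}$ and using $\gcd(M,N)=2$ to set $M = 2M'$, $N = 2N'$ with $\gcd(M',N')=1$ (the implication $\gcd(M,N)=2 \Rightarrow \gcd(M',N')=1$ is immediate: any common divisor of $M',N'$ would double to a common divisor of $M,N$ exceeding $2$). From $q^r-1 = (q-1)M = 2(q-1)M'$ and $nN = q^r-1$ I get $nN' = (q-1)M'$; since $\gcd(M',N')=1$, this forces $N' \mid q-1$, say $q-1 = N'k$ with $k=\tfrac{2(q-1)}{N}$. Then $|H| = \tfrac{q^r-1}{N} = k M'$, so
\[
\gcd(q-1, |H|) = \gcd(N'k, M'k) = k \cdot \gcd(N', M') = k = \frac{2(q-1)}{N},
\]
as desired.

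The only mild obstacle is the number-theoretic bookkeeping in the last paragraph, but it is purely a matter of tracking the factor of $2$ coming from the hypothesis $\gcd(M,N)=2$; once one writes $N=2N'$, $M=2M'$ the computation is mechanical. No deeper input (nothing about the even parity of $r$ or about squares) is needed for this lemma.
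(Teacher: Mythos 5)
Your proof is correct and follows essentially the same route as the paper: reduce the kernel computation to showing $\F_q^* \cap H = H_0$, i.e.\ $\gcd\left(q-1,\tfrac{q^r-1}{N}\right)=\tfrac{2(q-1)}{N}$, and then extract this from the hypothesis $\gcd\left(\tfrac{q^r-1}{q-1},N\right)=2$ by splitting off the factor $2$ and using coprimality of $M'=\tfrac{q^r-1}{2(q-1)}$ and $N'=\tfrac{N}{2}$. Your remark that the evenness of $r$ is not separately needed is also accurate, since the gcd hypothesis already forces $2\mid\tfrac{q^r-1}{q-1}$ (the paper invokes the parity of $r$ only to re-derive that fact).
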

\begin{proof}
Note that $(y,z) \in \Ker \phi$ if and only if $yz=1$. This implies that if $(y,z) \in \Ker \phi$, then $y \in \F_q^* \cap H$,
$z \in \F_q^* \cap H$ and $z=y^{-1}$. Conversely if $h \in \F_q^* \cap H$, then $(h,h^{-1}) \in \Ker \phi$. It remains to prove that $\F_q^* \cap H=H_0$.  This is equivalent to the statement that
\be \label{ep1.lemma1.proof.theorem.b.small}
\gcd\left(q-1, \frac{q^r-1}{N} \right)= \frac{2(q-1)}{N}.
\ee
Note that $\gcd\left( \frac{q^r-1}{q-1},N\right)=2$ by one of the main assumptions of this paper in Section 2. Hence we have
\be \label{ep2.lemma1.proof.theorem.b.small}
\gcd\left( \frac{q^r-1}{2(q-1)}, \frac{N}{2}\right)=1.
\ee
As $\left.\frac{N}{2} \right| \frac{q^r-1}{2}$ and $\frac{q^r-1}{2}=\frac{q^r-1}{2(q-1)} \cdot (q-1)$, using (\ref{ep2.lemma1.proof.theorem.b.small}) we conclude that
\be \label{ep3.lemma1.proof.theorem.b.small}
\left.\frac{N}{2} \right| (q-1).
\ee
Moreover $2 \left| \frac{q^r-1}{q-1}\right.$ as $r$ is even and $q$ is odd. Using (\ref{ep3.lemma1.proof.theorem.b.small}) we obtain that
\be \label{ep4.lemma1.proof.theorem.b.small}
\frac{2(q-1)}{N} \left| \gcd\left( q-1, \frac{q^r-1}{N}\right)\right..
\ee
Combining (\ref{ep2.lemma1.proof.theorem.b.small}) and (\ref{ep4.lemma1.proof.theorem.b.small}) we conclude that the statement in (\ref{ep1.lemma1.proof.theorem.b.small}) holds. This completes the proof.
\end{proof}
 In the following lemma we use the multiset notation $\{* \cdots *\}$ and the multiplicities are denoted as multiplication coefficients. For example we have $\{*x_1,x_1,x_2*\}=\{*2 \cdot x_1, x_2*\}$.

\begin{lemma} \label{lemma2.proof.theorem.b.small}
Let $a \in \F_{q^r}^*$. Using the multiset notation as above we have the followings:

\begin{itemize}
\item If $a$ is a square, then
\be
\{* ayz: y \in \F_q^*, \; z \in H *\}
=\left\{* \frac{2(q-1)}{N} \cdot x: x \in \F_{q^r}^*, \; \mbox{$x$ is a square}*\right\}.
\nn\ee
\item If $a$ is a non-square, then
\be
\{* ayz: y \in \F_q^*, \; z \in H *\}
=\left\{* \frac{2(q-1)}{N} \cdot x: x \in \F_{q^r}^*, \; \mbox{$x$ is a non-square}*\right\}.
\nn\ee

\end{itemize}

\end{lemma}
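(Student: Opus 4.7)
The plan is to read Lemma \ref{lemma1.proof.theorem.b.small} as providing a quotient description of the image of $\phi$, and then to show that this image is exactly the set of squares in $\F_{q^r}^*$. Multiplying by $a$ then transports this onto squares or non-squares, and the multiplicities all come from the kernel size $\frac{2(q-1)}{N}$.

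First, I would fix the multiplicity. Since $\phi:\F_q^*\times H \to \F_{q^r}^*$ is a group homomorphism with $|\Ker\phi|=\frac{2(q-1)}{N}$, every value in $\Ima\phi$ is attained exactly $\frac{2(q-1)}{N}$ times by $(y,z)\mapsto yz$, and hence (after translating by $a$) every value in $a\cdot\Ima\phi$ is attained exactly $\frac{2(q-1)}{N}$ times by $(y,z)\mapsto ayz$. A counting check gives $|\Ima\phi|=\frac{(q-1)\cdot(q^r-1)/N}{2(q-1)/N}=\frac{q^r-1}{2}$, which coincides with the number of squares in $\F_{q^r}^*$. So the entire content of the lemma reduces to identifying $\Ima\phi$ with the subgroup of squares.

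Next, I would verify that $\Ima\phi$ is contained in the square subgroup. Since $\F_{q^r}^*$ is cyclic of even order, it has a unique subgroup of index $2$, the squares, so it suffices to show that each of $\F_q^*$ and $H$ consists of squares. For $\F_q^*$: since $r$ is even and $q$ is odd, $\frac{q^r-1}{q-1}$ is even, so $\F_q^*=\langle \eta^{(q^r-1)/(q-1)}\rangle$ lies in $\langle \eta^2\rangle$, the squares. For $H$: by Lemma \ref{lemma1.proof.theorem.b.small} (more precisely, the divisibility (\ref{ep3.lemma1.proof.theorem.b.small}) derived there), $2\mid N$, so $|H|=\frac{q^r-1}{N}$ divides $\frac{q^r-1}{2}$, forcing $H$ into the unique subgroup of that order's multiples, i.e.\ into the squares. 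Combining, $\Ima\phi$ lies in the squares, and the cardinality match from the previous paragraph yields $\Ima\phi=\{x\in\F_{q^r}^*:x\text{ is a square}\}$.

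Finally, if $a$ is a square then $a\cdot\Ima\phi$ is again the set of squares, and if $a$ is a non-square then $a\cdot\Ima\phi$ is the set of non-squares (this is the standard coset statement in $\F_{q^r}^*/(\F_{q^r}^*)^2$). Combined with the uniform multiplicity $\frac{2(q-1)}{N}$ above, this gives both multiset identities in the statement. The only subtle step, and the one I would expect to require the most care, is the verification that $H$ sits inside the squares; everything else is bookkeeping with the homomorphism $\phi$ and its kernel from Lemma \ref{lemma1.proof.theorem.b.small}.
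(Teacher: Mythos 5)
Your proposal is correct and follows essentially the same route as the paper: use the kernel size of $\phi$ from Lemma \ref{lemma1.proof.theorem.b.small} to get the uniform multiplicity $\frac{2(q-1)}{N}$ and the image size $\frac{q^r-1}{2}$, identify $\Ima\phi$ with the squares, then translate by $a$. The only difference is that you spell out why $\F_q^*$ and $H$ consist of squares (via $r$ even and $2\mid N$), which the paper compresses into one sentence.
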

\begin{proof}
First we assume that $a=1$. The homomorphism $\phi: \F_q^* \times H \ra \F_{q^r}^*$ defined in Lemma \ref{lemma1.proof.theorem.b.small} has a kernel of cardinality $\frac{2(q-1)}{N}$ by Lemma \ref{lemma1.proof.theorem.b.small}. This implies that the image of $\phi$ has cardinality
\be \label{ep1.lemma2.proof.theorem.b.small}
\frac{(q-1) \frac{q^r-1}{N}}{\frac{2(q-1)}{N}}=\frac{q^r-1}{2}.
\ee
Moreover $\phi(y,z)$ is a square in $\F_{q^r}^*$ for any $y \in \F_q^*$ and $z \in H$ as $r$ is even and $N$ is even. Hence using (\ref{ep1.lemma2.proof.theorem.b.small}) we obtain that $\Ima \phi$ is exactly the set $\{x \in \F_{q^r}^*: x \mbox{ is a square}\}$. Counting the multiplicities we complete the proof if $a=1$.

If $a \in \F_{q^r}^* \setminus \{1\}$ is a square, then it is clear that the multiset $\{* ayz: y \in \F_q^*, \; z \in H*\}$
is the same as the multiset $\{* yz: y \in \F_q^*, \; z \in H*\}$.

If $a \in \F_{q^r}^*$ is a non-square, then we obtain that the set $\{ ayz: y \in \F_q^*, \; z \in H\}$ is equal to the set $\{ x: x \in \F_{q^r}^*, \; \mbox{$x$ is a non-square}\}$. Moreover the multiplicities are the same. This completes the proof.
\end{proof}

Assume that $a \in \F_{q^r}^*$ is a square. Using (\ref{ep4.proof.Theorem.small.b}), (\ref{ep5.proof.Theorem.small.b}) and Lemma \ref{lemma2.proof.theorem.b.small} we obtain that
\be \label{ep7.proof.Theorem.small.b}
\begin{array}{rcl}
Z(a) & = & \dd  \frac{1}{q} \left\{ q^r + q - 1 + N \frac{2(q-1)}{N}
\sum_{\substack{y \in \F_{q^r}^* \\ \mbox{$y$ is square}}}
e^{\frac{ 2 \pi \sqrt{-1}}{p} \Tr_{q^r/p} (y)} \right\} \\ \\

& = & \dd
\left\{
\begin{array}{ll}
\frac{1}{q} \left( q^r + q -1 + 2(q-1) \frac{-q^{r/2}-1}{2} \right) & \mbox{if $p \equiv 1 \mod 4$}, \\ \\
\frac{1}{q} \left( q^r + q -1 + 2(q-1) \frac{q^{r/2}-1}{2} \right) & \mbox{if $p \equiv 3 \mod 4$},
\end{array}
\right. \\ \\
& = & \dd
\left\{
\begin{array}{ll}
\frac{q^r -(q-1)q^{r/2}}{q} & \mbox{if $p \equiv 1 \mod 4$}, \\ \\
\frac{q^r - (-1)^{er/2}(q-1)q^{r/2}}{q} & \mbox{if $p \equiv 3 \mod 4$},
\end{array}
\right.
\end{array}
\ee

Next we assume that $a \in \F_{q^r}^*$ is a non-square. Using (\ref{ep4.proof.Theorem.small.b}), (\ref{ep6.proof.Theorem.small.b}) and Lemma \ref{lemma2.proof.theorem.b.small} similarly we obtain that
\be \label{ep8.proof.Theorem.small.b}
\begin{array}{rcl}
Z(a) & = & \dd
\left\{
\begin{array}{ll}
\frac{q^r +(q-1)q^{r/2}}{q} & \mbox{if $p \equiv 1 \mod 4$}, \\ \\
\frac{q^r + (-1)^{er/2}(q-1)q^{r/2}}{q} & \mbox{if $p \equiv 3 \mod 4$},
\end{array}
\right.
\end{array}
\ee

Recall that the set $\mathcal{P}(b)$ is defined in Definition \ref{definition.set.P.b}. An important result obtained in \cite{SOS} using algebraic curves over finite fields is that
\be \label{ep9.proof.Theorem.small.b}
w_b(\hat{c}(a))= \frac{1}{q^{b-1}} \sum_{\theta \in \mathcal{P}(b)} w \left( \hat{c}(\theta a)\right).
\ee

Recall that $\mu(b)$ is the integer defined in Definition \ref{definition.mu.b}.

Now assume that $p \equiv 1 \mod 4$ and $a$ is a square in $\F_{q^r}^*$. Combining (\ref{ep3.proof.Theorem.small.b}), (\ref{ep7.proof.Theorem.small.b}), (\ref{ep8.proof.Theorem.small.b}) and (\ref{ep9.proof.Theorem.small.b}) we obtain that
\be \label{ep10.proof.Theorem.small.b}
\begin{array}{rcl}
w_b\left(\hat{c}(a)\right) & = & \dd \frac{1}{q^{b-1}} \left( \mu(b) \left[ q^r - \frac{q^r - (q-1)q^{r/2}}{q}\right]\right) \\ \\
& & + \frac{1}{q^{b-1}} \left( \left( \frac{q^b-1}{q-1}-\mu(b) \right)  \left[ q^r - \frac{q^r + (q-1)q^{r/2}}{q}\right]\right) \\ \\
& = & \dd  \frac{q^b-1}{(q-1)q^{b-1}} \left( q^r - \frac{q^r + (q-1)q^{r/2}}{q}\right)  + \frac{2 \mu(b) (q-1) q^{r/2}}{q^b}.
\end{array}
\ee

Next assume that $p \equiv 1 \mod 4$ and $a$ is a non-square in $\F_{q^r}^*$. Combining (\ref{ep3.proof.Theorem.small.b}), (\ref{ep7.proof.Theorem.small.b}), (\ref{ep8.proof.Theorem.small.b}) and (\ref{ep9.proof.Theorem.small.b}) we obtain that
\be \label{ep11.proof.Theorem.small.b}
\begin{array}{rcl}
w_b\left(\hat{c}(a)\right) & = & \dd \frac{1}{q^{b-1}} \left( \mu(b) \left[ q^r - \frac{q^r + (q-1)q^{r/2}}{q}\right]\right) \\ \\
& & + \frac{1}{q^{b-1}} \left( \left( \frac{q^b-1}{q-1}-\mu(b) \right)  \left[ q^r - \frac{q^r - (q-1)q^{r/2}}{q}\right]\right) \\ \\
& = & \dd  \frac{q^b-1}{(q-1)q^{b-1}} \left( q^r - \frac{q^r - (q-1)q^{r/2}}{q}\right)  - \frac{2 \mu(b) (q-1) q^{r/2}}{q^b}.
\end{array}
\ee

Again next assume that $p \equiv 3 \mod 4$ and $a$ is a square in $\F_{q^r}^*$. Combining (\ref{ep3.proof.Theorem.small.b}), (\ref{ep7.proof.Theorem.small.b}), (\ref{ep8.proof.Theorem.small.b}) and (\ref{ep9.proof.Theorem.small.b}) we obtain that
\be \label{ep12.proof.Theorem.small.b}
\begin{array}{rcl}
w_b\left(\hat{c}(a)\right) & = & \dd \frac{1}{q^{b-1}} \left( \mu(b) \left[ q^r - \frac{q^r - (-1)^{er/2} (q-1)q^{r/2}}{q}\right]\right) \\ \\
& & + \frac{1}{q^{b-1}} \left( \left( \frac{q^b-1}{q-1}-\mu(b) \right)  \left[ q^r - \frac{q^r + (-1)^{er/2} (q-1)q^{r/2}}{q}\right]\right) \\ \\
& = & \dd  \frac{q^b-1}{(q-1)q^{b-1}} \left( q^r - \frac{q^r + (-1)^{er/2} (q-1)q^{r/2}}{q}\right)  + \frac{2 \mu(b) (-1)^{er/2} (q-1) q^{r/2}}{q^b}.
\end{array}
\ee

Finally assume that $p \equiv 3 \mod 4$ and $a$ is a non-square in $\F_{q^r}^*$. Combining (\ref{ep3.proof.Theorem.small.b}), (\ref{ep7.proof.Theorem.small.b}), (\ref{ep8.proof.Theorem.small.b}) and (\ref{ep9.proof.Theorem.small.b}) we obtain that
\be \label{ep13.proof.Theorem.small.b}
\begin{array}{rcl}
w_b\left(\hat{c}(a)\right) & = & \dd \frac{1}{q^{b-1}} \left( \mu(b) \left[ q^r - \frac{q^r + (-1)^{er/2}(q-1)q^{r/2}}{q}\right]\right) \\ \\
& & + \frac{1}{q^{b-1}} \left( \left( \frac{q^b-1}{q-1}-\mu(b) \right)  \left[ q^r - \frac{q^r - (-1)^{er/2} (q-1)q^{r/2}}{q}\right]\right) \\ \\
& = & \dd  \frac{q^b-1}{(q-1)q^{b-1}} \left( q^r - \frac{q^r - (-1)^{er/2} (q-1)q^{r/2}}{q}\right)  - \frac{2 \mu(b) (-1)^{er/2} (q-1) q^{r/2}}{q^b}.
\end{array}
\ee

Using (\ref{ep1.proof.Theorem.small.b}), (\ref{ep10.proof.Theorem.small.b}), (\ref{ep11.proof.Theorem.small.b}),
(\ref{ep12.proof.Theorem.small.b}) and (\ref{ep13.proof.Theorem.small.b}) we complete the proof of Theorem \ref{theorem.complete.b.symbol.C.b.small}.
\subsection{Proof of Theorem \ref{theorem.complete.b.symbol.C.b.large}}
If $r < b \le n-1$, then using the proof of \cite[Theorem 4.2]{SOS} we obtain that $w_b(c(a))=w_r(c(a))$. It remains to prove Theorem \ref{theorem.complete.b.symbol.C.b.large} for $b=r$.

Assume that $p \equiv 1 \mod 4$ and $a$ is a square in $\F_{q^r}^*$. Then using Lemma \ref{lemma.determine.mu.b.is.r} and Theorem \ref{theorem.complete.b.symbol.C.b.small} we obtain that $ \mu(r)=\frac{q^r-1}{2(q-1)}$ and hence
\be
\begin{array}{rcl}
w_r(c(a)) & = & \dd \frac{q^r-1}{N(q-1)q^{r-1}} \left( q^r - \frac{q^r + (q-1)q^{r/2}}{q}\right) \dd + \frac{(q^r-1)q^{r/2}}{Nq^r} \\ \\
 & = & \dd \frac{q^r-1}{N(q-1)q^{r-1}} \left( q^r - q^{r-1}\right) \dd + \frac{q^r-1}{Nq^r} \left( -q^{r/2} + q^{r/2} \right) \\ \\
 & = & \dd \frac{\left(q^r-1\right)q^{r-1}}{Nq^{r-1}} = \frac{q^r-1}{N}=n.
\end{array}
\nn\ee
The proof is similar for the case that $p \equiv 1 \mod 4$ and $a$ is a non-square.

Next assume that $p \equiv 3 \mod 4$ and $a$ is a square in $\F_{q^r}^*$. Then similarly  using Lemma \ref{lemma.determine.mu.b.is.r} and Theorem \ref{theorem.complete.b.symbol.C.b.small} we obtain that
\be
\begin{array}{rcl}
w_r(c(a)) & = & \dd \frac{q^r-1}{N(q-1)q^{r-1}} \left( q^r - \frac{q^r + (-1)^{er/2} (q-1)q^{r/2}}{q}\right) \dd + \frac{(-1)^{er/2}(q^r-1)q^{r/2}}{Nq^r} \\ \\
 & = & \dd \frac{q^r-1}{N(q-1)q^{r-1}} \left( q^r - q^{r-1}\right)  \dd + \frac{(-1)^{er/2}q^r-1}{Nq^r} \left( -q^{r/2} + q^{r/2} \right) \\ \\
 & = & \dd \frac{\left(q^r-1\right)q^{r-1}}{Nq^{r-1}} = \frac{q^r-1}{N}=n.
\end{array}
\nn\ee
The proof is similar for the case that $p \equiv 3 \mod 4$ and $a$ is a non-square. From the definition of the MDS $b$-symbol codes, it is easy to check that $C$ is MDS when $b=r$.
This completes the proof of Theorem \ref{theorem.complete.b.symbol.C.b.large}.
\begin{example}
Let $\gamma$ be a primitive element of $\F_{81}$ with $\gamma^4+2\gamma^3+2=0$.

According to Table 1 of \cite{Ding}, the Hamming weight enumerator of $C$ is $1+A^{(1)}_{24}T^{24}+A^{(1)}_{30}T^{30}
=1+40T^{24}+40T^{30}$ when $q=3, r=4, N=2$ and $\gcd(\frac{q^r-1}{q-1},N)=2$. 
 When $b=2$, $|\sP(2)|=q+1=4$.
Let $\sP(2)=\{1,\gamma^2,1+\gamma^2, 1+2\gamma^2\}$. If $a\in C_i^{(2,r)}$, then $\gamma^2a, (1+\gamma^2)a=
\gamma^{58}a\in C_i^{(2,r)},
(1+2\gamma^2)a=\gamma^{65}a\in C_{i+1 ({\rm mod}~ 2)}^{(2,r)}$, where $i=0,1$.
According to the equality (\ref{ep9.proof.Theorem.small.b}), we have
\begin{small}
\begin{eqnarray*}
  w_2({c}{(a)}) &=& \frac{1}{3}[w_1({c}{(a)})
  +w_1({c}{(\gamma^2a)})
  +w_1({c}{((1+\gamma^2)a)})
  +w_1({c}{((1+2\gamma^2)a)})]\\
  ~ &=& \left \{
               \begin{aligned}
               &  \frac{1}{3}(24+24+24+30)=34
  &\quad a\in C_0^{(2,r)}, &  \\
                &  \frac{1}{3}(30+30+30+24)=38 &a\in C_1^{(2,r)}. &
              \end{aligned}
\right.
\end{eqnarray*}
\end{small}
Therefore the $2$-symbol Hamming weight enumerator of $C$ is $1+A_{34}^{(2)}T^{34}+A_{38}^{(2)}T^{38}=1+|C_0^{(2,r)}|T^{34}+
|C_1^{(2,r)}|T^{38}=1+40T^{34}+40T^{38}$.

Combining Example \ref{ep2.2} and Corollary \ref{corollary.weight.enumerator}, we have $\mu(2)=3$, $u_1=38$ and $u_2=34$, which is the same as above result.
The above result also verified by Magma experiment.
\end{example}

\section*{Acknowledgement}
This research is supported by National Natural Science Foundation of China (12071001,\\61672036), Excellent Youth Foundation of Natural Science Foundation of Anhui Province (1808085J20) and the Academic Fund for Outstanding Talents in Universities (gxbjZD03).

The authors are grateful to Prof. Tor Helleseth for helpful discussions and
suggestions.

\section{appendix}
Using algebraic curves, Shi {\it et al.} \cite{SOS} proved that the equality (\ref{ep9.proof.Theorem.small.b}) holds. Here we give another proof which using exponent sum.

Let $C$ be a cyclic irreducible code of period $n$ and dimension $k$ with zero $\eta=\gamma^{N}$ where $\gamma^N$ is a primitive element in $\F_{q^r}$ and $N=\frac{q^k-1}{n}$. Let $I=\{1,\eta,\ldots,\eta^{n-1}\}.$ Recall that $C=\{c(a)=\left( \Tr( a \eta^{0 \cdot N}), \Tr(a \eta^{1 \cdot N}), \ldots, \Tr(a \eta^{j \cdot N}), \ldots, \Tr(a \eta^{(n-1)\cdot N})\right)\}$ and the Hamming weight of a codeword $w_1(c(a))$ is $$n-w_1(c(a))=\sum_{x\in I}\frac{1}{q}\sum_{y\in\F_q}\chi(yax),$$
where $\chi$ is an additive character of $\F_{q^r}$.
To find the $b$-symbol Hamming weight of $c(a)$, then one needs to count how often $b$ consecutive positions $(\Tr(ax),\Tr(a\eta x),\cdots,\Tr(s\eta^b x))$ are all $0$. This implies
\begin{eqnarray*}
  n-w_b(c(a)) &=& \sum_{x\in I}\frac{1}{q^b}\sum_{y_1,\ldots,y_b\in\F_q}\chi((y_1+y_2\eta+\cdots+y_{b-1}\eta^{b-1})ax) \\
  ~ &=& \frac{1}{q^{b-1}}\sum_{x\in I}\frac{1}{q}\sum_{u\in V}\chi(uax),
\end{eqnarray*}
where $V=<1,\eta,\cdots,\eta^{b-1}>.$ Let $\bigcup_{i=0}^{\frac{q^b-1}{q-1}}\theta_i\F_q^*\cup0$ then
\begin{eqnarray*}
  n-w_b(c(a)) &=& \frac{1}{q^{b-1}}\sum_{i=1}^{\frac{q^b-1}{q-1}}\sum_{x\in I}\frac{1}{q}\sum_{y\in\F_q^*}\chi(y\theta_iax)+\frac{n}{q^b} \\
  ~ &=& \frac{1}{q^{b-1}}\sum_{i=1}^{\frac{q^b-1}{q-1}}(n-w_1(c(\theta_ia)))-\frac{n(q^b-1)}{q^b(q-1)}+\frac{n}{q^b} \\
  ~ &=& \frac{n(q^b-1)}{q^{b-1}(q-1)}-\frac{1}{q^{b-1}}\sum_{i=1}^{\frac{q^b-1}{q-1}}w_1(c(\theta_ia))-\frac{n(q^b-q)}{q^b(q-1)} \\
  ~ &=& n-\frac{1}{q^{b-1}}\sum_{i=1}^{\frac{q^b-1}{q-1}}w_1(c(\theta_ia)).
\end{eqnarray*}
Hence, $w_b(c(a))=\frac{1}{q^{b-1}}\sum_{i=1}^{\frac{q^b-1}{q-1}}w_1(c(\theta_ia)).$



\begin{thebibliography}{99}




\bibitem{CB1}
		\newblock Y. Cassuto and M. Blaum:
		\newblock \textit{Codes for symbol-pair read channels},
		\newblock in Proc. Int. Symp. Inf. Theory, Austin, TX, USA, Jun. (2010),   pp. 988--992.

\bibitem{CB2}
		\newblock Y. Cassuto and M. Blaum:
		\newblock \textit{Codes for symbol-pair read channels},
		\newblock IEEE Trans. Inform. Theory  \textbf{57} (2011), no. 12,   pp. 8011--8020.

\bibitem{CL}
		\newblock Y. Cassuto and S. Litsyn:
		\newblock \textit{Symbol-pair codes: Algebraic constructions and asymptotic bounds},
		\newblock in Proc. Int. Symp. Inf. Theory, St. Petersburg, Russia, Jul./Aug. (2011),   pp. 2348--2352.
\bibitem{C+} \newblock Y.M. Chee, L. Ji, H.M. Kiah, C. Wang and J. Yin: \newblock \textit{Maximum distance separable codes for symbol pair read channels, IEEE Trans. Inform. Theory}, \textbf{59} (2013), no. 11,   pp. 7259--7267.
\bibitem{C+1} \newblock Y.M. Chee, H.M. Kiah, C. Wang and J. Yin: \newblock \textit{Maximum distance separable symbol-pair codes}, In: Proc. IEEE Int. Symp. Inf. Theory, Cambridge, MA, USA, (2012), 2886--2890.
\bibitem{CLL} \newblock B. Chen, L. Lin and H. Liu: \newblock \textit{Constacyclic symbol-pair codes: lower bounds and optimal constructions}, \newblock IEEE Trans. Inform. Theory \textbf{63} (2017) no. 12, pp. 7661--7666.
\bibitem{DGZ}\newblock B. Ding, G. Ge, J. Zhang, T. Zhang and Y. Zhang:  \newblock \textit{New constructions of MDS symbol-pair codes. Des. Codes Cryptogr.}, \textbf{86} (2018) no. 4, pp. 841--859.
\bibitem{DTG}\newblock B. Ding, T. Zhang and G. Ge: \newblock \textit{Maximum distance separable codes for b-symbol read channels, Finite Fields Appl.}, \textbf{49} (2018), pp. 180--197.

\bibitem{Ding}
		\newblock C. Ding and J. Yang:
		\newblock \textit{Hamming weights in irreducible cyclic codes},
		\newblock  Discrete Math.  \textbf{313} (2013),  no. 4,  pp. 434--446.












\bibitem{KZL}
		\newblock X. Kai, S. Zhu, and P. Li:
		\newblock \textit{A construction of new MDS symbol-pair codes},
		\newblock IEEE Trans. Inform. Theory  \textbf{61} (2015), no. 11,   pp. 5828--5834.


\bibitem{LN}
        \newblock R. Lidl and H. Niederreiter:
		\newblock \textit{Finite Fields},
		\newblock Second Edition,  Encyclopedia of Mathematics and its Applications, 20, Cambridge University Press, Cambridge, 1997.




\bibitem{SW}
		\newblock B. Schmidt and C. White,
        \newblock \textit{All two-weight irreducible cyclic codes?},
		\newblock  Finite Fields Appl.  \textbf{8} (2002),  no. 1,  pp. 1--17.
		

\bibitem{SOS}
		\newblock M. Shi, F. \"{O}zbudak and P.  Sol\'e :
		\newblock \textit{Geometric Approach to $b$-Symbol Hamming Weights of Cyclic Codes},
		\newblock IEEE Trans. Inform. Theory, accepted, 2021.

\bibitem{Z}\newblock Z. Sun, S. Zhu and L. Wang: \newblock \textit{The symbol-pair distance distribution of a class
of repeated-root cyclic codes over $\F_{p^m}$}, Cryptogr. Commun.,\textbf{10} (2018), no. 4,   pp. 643--653.



\bibitem{YBS}
		\newblock E. Yaakobi, J. Bruck  and P. H. Siegel:
		\newblock \textit{Constructions and Decoding of Cyclic Codes Over $b$-Symbol Read Channels},
		\newblock IEEE Trans. Inform. Theory  \textbf{62} (2016), no. 4,   pp. 1541--1551.


\end{thebibliography}
\end{document}